\newcommand{\IR}{\mathbb{R}}
\newcommand{\IZ}{\mathbb{Z}}
\newcommand{\IN}{\mathbb{N}}
\newcommand{\IC}{\mathbb{C}}
\newcommand{\IE}{\mathbb{E}}
\newcommand{\IP}{\mathbb{P}}
\newcommand{\IV}{\mathbb{V}}
\newcommand{\sA}{\mathcal{A}}
\newcommand{\supp}{\text{\rm{supp}}}
\newtheorem{theorem}{Theorem}
\newtheorem{lemma}[theorem]{Lemma}
\theoremstyle{definition}
\newtheorem{remarks}[theorem]{Remarks}
\newtheorem{definition}[theorem]{Definition}
\newtheorem*{assumption}{Assumptions}
\newcommand{\bew}{\noindent\textbf{Proof:}\quad}
\newcommand{\ebew}{\hfill\qed\\}
\renewenvironment{proof}{\bew}{\ebew}
\newenvironment{balign*}[1][12pt]{\setlength{\jot}{#1}\nonumber\align}{\endalign}
\renewcommand{\epsilon}{\varepsilon}
\begin{document}
\title{Lifshits tails for squared potentials}
\author[W.~Kirsch]{Werner Kirsch}
\author[G.~Raikov]{Georgi Raikov}

\begin{abstract}\setlength{\parindent}{0mm}
We consider Schr\"{o}dinger operators with a random potential which is the square of an alloy-type potential. We investigate their integrated density of states and prove Lifshits tails.

Our interest in this type of models is triggered by an investigation of randomly twisted waveguides.
\end{abstract}

\maketitle

{\bf  AMS 2010 Mathematics Subject Classification:} 82B44,  35R60, 47B80,  81Q10\\

{\bf  Keywords:} Integrated density of states, Lifshits tails, Squares of random potentials

\section{Introduction}
In the 1960s Lifshits \cite{Lifshits1} discovered that the density of states for periodic systems and the one for random systems show very different behavior near the bottom of their spectra. While the integrated density of states  $N (E)$ for a $d$-dimnesional periodic system behaves  like $(E - E_0)^{\frac{d}{2}}$, $E \searrow E_0$, near the ground state energy $E_0$, $N(E)$ it behaves like $e^{- C\,(E - E_0)^{- \frac{d}{2}}}$ for typical random systems. In the former case  the  integrated density of states is said to have {\em a van Hove singularity} at $E_0$, and in the latter one $N(E)$ exhibits {\rm a Lifshits tail} near $E_0$.

Starting with the seminal work by Donsker and Varadhan \cite{DonskerV}, there has been a strong interest in this type of questions in the mathematical physics literature. For a review (as of 2006) see e.g. \cite{KirschM} (see also \cite{AizenmanW}, \cite{Kirsch}), some more recent developments are \cite{Ghribi}, \cite{KloppN}, \cite{KloppN1} and \cite{Shen}.

One of the most common random potentials and the one we are dealing with in this paper is the alloy-type potential
\begin{equation}
U_\omega (x) = \sum_{i \in \IZ^d} q_i (\omega) f (x - i)
\label{eq:alloy}
\end{equation}
where $x \in \IR^d$, $q_i$ are independent, identically distributed random variables and $f$ is a (say) bounded measurable function decaying sufficiently fast at infinity.

Lifshits tails for
\begin{equation}
H_\omega = - \Delta + U_\omega
\end{equation}
are well known for alloy-type potentials as in \eqref{eq:alloy} if both the $q_i$ and the function $f$ have definite sign.

Recently, there has been interest in the case that $q_i$ and/or $f$ change sign (see e.g. \cite{Ghribi}, \cite{KloppN}, \cite{KloppN1}). In these models the lack of monotonicity makes it much harder to prove Lifshits tails.

In the paper \cite{KirschKR} David Krej\v{c}i\v{r}\'ik and the present authors investigate twisted wave guides $\mathcal{M}$ which emerge from the cylinder $M = m \times \IR$ with a cross section $m\subset\IR^{2}$ by rotation of $m$ around the axis cylinder at an angle $\theta$ which depends on the variable along this axis. 
The twist function $U(x):=\dot{\vartheta}(x)$, $x \in \IR$,  is supposed to be random. If the cross section is not rotationally symmetric and
its diameter is small, we were able to bound the integrated density of states $N$ of the Laplacian on the twisted waveguide by the integrated density of states of a one-dimensional
Schr\"{o}dinger operator with potential $V_{\omega}(x)= U_\omega (x)^2$.

In fact, Lifshits tails for the twisted waveguide correspond to Lifshits tails of the Schr\"odinger operator
\begin{equation}
H_\omega = - \Delta + V_\omega = - \Delta + {U_\omega}^{2}
\label{eq:square}
\end{equation}
with an alloy-type potential $U_\omega$ as in \eqref{eq:alloy}.
This observation was the initial motivation for the present paper.
In \cite{KirschKR} we need only the one-dimensional case of \eqref{eq:square} but here we will deal with this model in arbitrary dimension $d \geq 1$ as this will not cause additional complications.

Obviously, the potential $V_\omega (x) = U_\omega (x)^2$ is non-negative. We will allow, however, that both $q_i$ and $f$ may change sign, so that we lose monotonicity in those parameters.

\section{Setting}
We consider the random potential
\begin{equation}\label{eq:defU}
U_\omega (x) = \sum_{i \in \IZ^d} q_i (\omega) f(x - i), \quad  x \in \IR^d, \quad d\geq 1,
\end{equation}
on a probability space $(\Omega,\sA,\IP)$. The expectation with respect to $\IP$ will be denoted by $\IE$.
 Throughout this paper
we make the following assumptions.
\begin{assumption}\mbox{}\label{Ass}
\begin{enumerate}
\item The real valued random variables $q_i$ are independent and identically distributed. Their common distribution is denoted by $P_0$.\label{Ass:0}
\item The support $\supp\; P_{0}$ contains
more than one point, $0\in\supp\;P_{0}$ and $\supp ~ P_0 \subset [-Q,Q]$ for some $Q<\infty$.\label{Ass:1}
\item For some $K\geq 0, C>0$ and all $\epsilon>0$ small enough \label{Ass:2}
\begin{align*}
   \IP ( |q_i| < \epsilon ) \geq C \epsilon^K\,.
\end{align*}
\item $f$ is a bounded (measurable) real valued function, $f \neq 0$, with \label{Ass:3}
\begin{align*}
   |f(x)|~\leq~\frac{C}{(1+|x|)^{\alpha}}
\end{align*}
for some $C$ and $\alpha > d$.
\end{enumerate}
\end{assumption}

We remark that Assumption \ref{Ass:3} ensures that $\IE(|\sum_{i\in\IZ_{d}}q_{i}f(x-i)|)<\infty$, thus $V_{\omega}(x)$ exists
and is finite almost surely and for almost all $x$ (see e.\,g. \cite{KirschMa1}).

Next, we set
\begin{align}\label{eq:defV}
     V_\omega (x) = U_\omega (x)^2 = \Big(\sum q_i f (x-i)\Big)^2\,.
\end{align}
and define the operator
\begin{equation}\label{eq:defH}
H_\omega : = H_0 + V_\omega\, 
\end{equation}
with $H_0 : = - \Delta  $. Since $V_{\omega}$ is non-negative, it is clear that $\sigma (H_\omega) \subset [0, \infty)$. From general results we even have
$\sigma (H_\omega) = [0, \infty)$ (see \cite{KirschMa}). We remark that we made no assumption on the sign of the $q_{i}$ or of $f$. In fact, unless otherwise stated,
both may change sign.

Let us introduce the integrated density of states $N$ of $H_{\omega}$.
For $\Lambda = [-\frac{L}{2}, \frac{L}{2}]^d$ let $ H_{\Lambda}^N$ and $H_{\Lambda}^D$ be the operator $H_\omega$ restricted to $L^2 (\Lambda)$ with Neumann, resp. Dirichlet, boundary conditions. These operators have a purely discrete spectrum.
By $\lambda_k (H_{\Lambda}^N)$ and $\lambda_k (H_{\Lambda}^D)$ we denote the eigenvalues of $H_{\Lambda}^N$ respectively $H_{\Lambda}^D$ in increasing order and counted according to multiplicity.

For $E \in \IR$, we define
\begin{align*}
   N(H_{\Lambda}^N,E)~:=&~\# \{\lambda_k (H_{\Lambda}^N) \leq E \}\,,\\
   N(H_{\Lambda}^D,E)~:=&~\# \{\lambda_k (H_{\Lambda}^D) \leq E \}\,.
\end{align*}

Then  \emph{the integrated density of states} of $H_\omega$ is the limit
\begin{equation*}
N(E)  = \underset{L \rightarrow \infty}{\lim} \frac{1}{L^d} N(H_{\Lambda}^N,E)
 = \underset{L \rightarrow \infty}{\lim} \frac{1}{L^d} N(H_{\Lambda}^D,E)\,.
\end{equation*}

By \emph{Lifshits tails} we mean that the integrated density of states $N$ of $H_{\omega}$ behaves \emph{roughly} like $e^{-C\,(E-E_{0})^{-\gamma}}$ as $E\searrow E_{0}$
where $E_{0}$ is the
bottom of the spectrum of $H_{\omega}$. More precisely, we have 
\begin{align}\label{eq:Lifshits}
   \lim_{E\searrow E_{0}}\;\frac{\ln|\ln N(E)|}{\ln E}~=~-\gamma
\end{align}
where $\gamma>0$ is called the \emph{Lifshits exponent}. For Schr\"odinger operators $-\Delta + U_\omega$ with \emph{alloy-type potentials} $U_\omega$ as in \eqref{eq:defU} the Lifshits exponent depends on the behavior of $f$ at infinity.
If $|f(x)|\leq C|x|^{-(d+2)}$ for large $|x|$, then $\gamma=\frac{d}{2}$, the `classical' value for $\gamma$. If $f(x)\sim C|x|^{-\alpha}$ for $d<\alpha<d+2$ then
$\gamma=\frac{d}{\alpha-d}$ (see e.g. \cite{KirschS}).

\section{Results}
In this section we state our results for the squared random potential $V_\omega = U_\omega^2$ as in \eqref{eq:defV}. As in the conventional case (i.\,e. for \eqref{eq:defU}) we obtain
Lifshits behavior as in \eqref{eq:Lifshits}. Again, the Lifshits exponent depends on the behavior of $f$ at infinity. This time, however, the threshold is
$\alpha=d+1$ rather than the `conventional' $d+2$.

\begin{theorem}\label{th:lifshits}
\mbox{}
Suppose $q_{i}$ are independent random variables with common distribution $P_{0}$ satisfying Assumptions \ref{Ass:1} and \ref{Ass:2}.

{\rm (i)} If $f$ satisfies Assumption \ref{Ass:3} with some $\alpha\geq d+1$ then
\begin{equation} \label{g8}
\underset{E\searrow 0}{{\lim}} ~~ \frac{\ln | \ln N (E)|}{\ln E} ~=~- \frac{d}{2}\,.
\end{equation} 

{\rm (ii)} If ${\rm supp}\,P_0 \subset [0,\infty)$ and $f$ satisfies
\begin{align*}
  \frac{C_1}{(1+|x|)^{\alpha}}~\leq~ f(x)~\leq~\frac{C_{2}}{(1+|x|)^{\alpha}}
\end{align*}
for some $d < \alpha < d+1$ and constants $C_{1},C_{2}>0$, then
\begin{equation} \label{g9}
\underset{E \searrow 0}{{\lim}} ~~ \frac{\ln | \ln N (E)|}{\ln E} = - \frac{d}{2 (\alpha - d)}\,.
\end{equation}
\end{theorem}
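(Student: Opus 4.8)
The plan is to establish the two-sided bound $\mathrm e^{-C_1 E^{-\gamma}} \le N(E) \le \mathrm e^{-C_2 E^{-\gamma}}$ (up to the usual roughening implicit in \eqref{eq:Lifshits}) separately via a lower bound and an upper bound on $N(E)$, with $\gamma = d/2$ in case (i) and $\gamma = d/(2(\alpha-d))$ in case (ii). The key structural fact is that $V_\omega = U_\omega^2 \ge 0$, so $H_\omega \ge H_0 = -\Delta$, and the difficulty is purely in how small $V_\omega$ can be made on a large box by an appropriate choice of the $q_i$. For the \emph{lower bound} on $N(E)$ (the Lifshits-tail direction, showing $N$ is at least as large as the claimed exponential) I would use Neumann bracketing: $N(E) \ge L^{-d} N(H^N_\Lambda, E) \ge L^{-d}\,\IP(\lambda_1(H^N_\Lambda) \le E)$, and on the event that all relevant $q_i$ are small — specifically $|q_i| < \epsilon$ for $i$ in a box of side $\sim L$ plus a margin — Assumption \ref{Ass:3} gives a deterministic bound $\sup_{x\in\Lambda} |U_\omega(x)| \le \epsilon\, g(L)$ with $g(L) = C + C\sum_{|i|\ge L/2}(1+|i|)^{-\alpha}$; when $\alpha > d$ this tail sum is bounded (case (i), $g(L) = O(1)$), and when $d<\alpha<d+1$ it grows like $L^{d-\alpha}$. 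Hence $V_\omega \le \epsilon^2 g(L)^2$ on $\Lambda$, so $\lambda_1(H^N_\Lambda) \le \lambda_1(-\Delta^N_\Lambda) + \epsilon^2 g(L)^2 = \epsilon^2 g(L)^2$. Choosing $\epsilon = \epsilon(L)$ so that $\epsilon^2 g(L)^2 \sim E$ and using Assumption \ref{Ass:2}, the probability of the favorable event is at least $(C\epsilon^K)^{c L^d} = \exp(-c' L^d |\ln\epsilon|)$; then optimizing $L$ against $E$ (in case (i) taking $L \sim E^{-1/2}$; in case (ii) balancing $\epsilon^2 L^{2(d-\alpha)} \sim E$ with $L^d$) yields the claimed lower bound on $N(E)$, with the logarithmic factors absorbed by the double-log in \eqref{eq:Lifshits}.

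For the \emph{upper bound} on $N(E)$ I would use Dirichlet bracketing together with a lower bound on the ground-state energy $\lambda_1(H^D_\Lambda)$ that holds with overwhelming probability, i.e. a spectral lower bound of the form $\lambda_1(H^D_\Lambda) \ge c/L^2 + (\text{potential contribution})$, combined with the observation that on a box $U_\omega$ cannot be uniformly small unless many $q_i$ are simultaneously small. Concretely one localizes on sub-boxes of side $\ell$ (to be optimized against $E$): on each sub-cube the Dirichlet Laplacian contributes $\ge c/\ell^2$, and one shows that outside an event of probability $\le \exp(-c\,(L/\ell)^d)$ at least a fixed fraction of the sub-cubes carry $\int V_\omega \ge \delta$ for a suitable $\delta = \delta(\ell)$, which by a standard Temple/second-order perturbation or Neumann-series argument forces $\lambda_1 \ge E$ once $E$ is small. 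The quantitative input is again Assumption \ref{Ass:3}: to make $U_\omega$ small on a sub-cube one needs not only the $q_i$ with $i$ near the cube to be small but — because of the slow ($\alpha$ close to $d$) decay in case (ii) — also sufficiently many far-away $q_i$ to conspire, which is precisely what makes the threshold $\alpha = d+1$ (rather than $d+2$) appear: squaring turns the usual $|x|^{-(d+2)}$ borderline for $U_\omega$ into a $|x|^{-(d+1)}$ borderline for $V_\omega = U_\omega^2$ through the effective decay rate of $\int_{\text{cube}} V_\omega$.

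The main obstacle is the \textbf{lack of monotonicity} in the $q_i$ and in $f$: one cannot simply say ``make all $q_i$ equal to their smallest value'' because $V_\omega$ depends quadratically on the $q_i$, so there are cancellations, and conversely the naive upper-bound strategy of finding one large $|q_i|$ does not directly force $V_\omega$ to be large \emph{somewhere useful}. In case (ii) the extra hypotheses ($\mathrm{supp}\,P_0 \subset [0,\infty)$ and $f \ge C_1(1+|x|)^{-\alpha} > 0$) are exactly what restores enough monotonicity: with all $q_i \ge 0$ and $f > 0$ one has the clean pointwise lower bound $U_\omega(x) \ge q_i f(x-i) \ge 0$ for each single $i$, hence $V_\omega(x) \ge q_i^2 f(x-i)^2$, and also the matching upper bound $U_\omega(x) \le Q\sum_i f(x-i) \le Q\, C_2\, h(x)$ with $h$ an explicit slowly-decaying envelope; these two-sided controls make both the small-$V_\omega$ event and the large-$V_\omega$ event tractable. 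I expect the bookkeeping in the upper bound — choosing the sub-cube scale $\ell$, the threshold $\delta(\ell)$, and showing the requisite fraction of ``heavy'' sub-cubes via a large-deviation estimate for sums of i.i.d.\ indicators — to be the most delicate part, but it follows the now-standard template for Lifshits tails once the correct $\ell$-versus-$E$ scaling dictated by the exponent $\alpha$ is identified.
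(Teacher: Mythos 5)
You have the Dirichlet--Neumann bracketing directions reversed: the correct two-sided bound is $|\Lambda|^{-1}\,\IE\big(N(H^D_\Lambda,E)\big) \leq N(E) \leq |\Lambda|^{-1}\,\IE\big(N(H^N_\Lambda,E)\big)$, so the lower bound on $N(E)$ is obtained with \emph{Dirichlet} boundary conditions and the upper bound with \emph{Neumann} --- the opposite of what you wrote. In addition, your lower-bound bookkeeping in case (ii) is off: the contribution to $\sup_{x\in\Lambda}|U_\omega(x)|$ from the uncontrolled coefficients $q_i$ with $|i|>L+R$ is of order $Q\,R^{d-\alpha}$, \emph{not} $\epsilon\,R^{d-\alpha}$ as your ansatz $\sup|U_\omega|\leq\epsilon\,g(L)$ suggests. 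Hence to drive this tail below $L^{-1}$ when $d<\alpha<d+1$ one must enlarge the margin to $R\sim L^{1/(\alpha-d)}\gg L$, and it is precisely the resulting number of constrained sites $\sim L^{d/(\alpha-d)}$ that produces the exponent $\frac{d}{2(\alpha-d)}$ in \eqref{g9}; a margin $\sim L$ gives only the (wrong) exponent $d/2$ for case (ii).

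The more serious gap is in the upper bound. You correctly identify non-monotonicity as the obstacle, and then propose a sub-cube decomposition plus Temple-type argument which --- as you yourself note --- hinges on showing that, with overwhelming probability, a fixed fraction of sub-cubes carry $\int V_\omega\geq\delta$. In the sign-indefinite setting of part (i) that is exactly the hard step, because cancellations in $U_\omega$ on a sub-cube cannot be ruled out by controlling a few nearby $q_i$, and you do not supply an argument. The paper sidesteps this entirely. By Stollmann's analytic perturbation lemma, $\IP\big(\lambda_1(H^N_\Lambda)<bL^{-2}\big)$ is reduced to a deviation estimate for the \emph{single} random variable $E'(0)=|\Lambda|^{-1}\int_\Lambda V_\omega\,dx$ (the Hellmann--Feynman derivative of $\lambda_1(-\Delta^N_\Lambda+tV_\omega)$ at $t=0$), whose mean $\rho\geq\IV(q_0)\|f\|_2^2$ is strictly positive. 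The deviation is then controlled by McDiarmid's bounded-difference inequality: changing one $q_j$ alters $\int_\Lambda V_\omega$ by at most $\sigma_j$ with $\sum_j\sigma_j^2\lesssim L^d$, and this uses only $\alpha>d$ --- no sign condition. That yields $\IP\big(\int_\Lambda V_\omega<\tfrac12\rho|\Lambda|\big)\leq 2e^{-cL^d}$, hence \eqref{g4}. For case (ii) the paper switches to the pointwise bound $\IP(\inf_\Lambda V_\omega<E)\leq\IP(\inf_\Lambda U_\omega<E^{1/2})$ (this is where the sign hypotheses enter) and again applies McDiarmid, now to the linear functional $\sum_i q_i(L+|i|)^{-\alpha}$. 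The concentration-inequality mechanism is the essential ingredient missing from your proposal, and without it the sub-cube route does not get off the ground in the sign-indefinite case.
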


\begin{remarks}
\mbox{}
\begin{enumerate}
\item For 'non-squared' random potentials as in \eqref{eq:defU} the critical value of $\alpha$ is $d + 2$, rather than $d + 1$ for the squared case.
\item We will prove Theorem \ref{th:lifshits} by showing corresponding upper and lower bounds on $N(E)$. Assumptions \ref{Ass:1} and \ref{Ass:2} are only needed
for the lower bounds. 
\end{enumerate}
\end{remarks}

\section{Strategy of the Proof}
We use the technique of Dirichlet-Neumann-bracketing (see \cite{KirschMa2} and \cite{KirschS}). This method is based on the inequalities
\begin{align}\label{eq:DNbrack}
   \frac{1}{|\Lambda|}\,\IE\big(N(H_{\Lambda}^{D},E)\big)~\leq N(E)~\leq ~ \frac{1}{|\Lambda|}\,\IE\big(N(H_{\Lambda}^{N},E)\big)
\end{align}
which are valid for any cube $\Lambda=\Lambda_{L}:=[-\frac{L}{2}, \frac{L}{2}]^{d}$ with $|\Lambda_{L}|=L^d$ being the volume of $\Lambda_{L}$.
Most of the time we simply write $\Lambda$ instead of $\Lambda_{L}$ as we did in \eqref{eq:DNbrack}.

The right hand side of \eqref{eq:DNbrack} can further be estimated by
\begin{equation} \label{g2}
   \frac{1}{|\Lambda|}\,\IE\big(N(H_{\Lambda}^{N},E)\big)~\leq~\frac{N(-\Delta_{\Lambda}^{N},E)}{|\Lambda|}\ \ \IP\big(\lambda_{1}(H_{\Lambda}^{N})\;<\;E\big)\,.
\end{equation} 
If $E \in (0, cL^{-2})$ with a constant $c>0$, then, obviously, 
\begin{equation} \label{g16}
N(-\Delta_{\Lambda}^{N},E) \leq N(-\Delta_{\Lambda_1}^{N},c) < \infty\,.
\end{equation}  
Consequently, we have to estimate
\begin{align}\label{eq:prob}
\IP\big(\lambda_{1}(H_{\Lambda}^{N})\;<\;E\big)
\end{align}
 from above.

To do so, we use the McDiarmid inequality which we introduce in Section \ref{sec:McDiarmid}.
The estimate of \eqref{eq:prob} using the McDiarmid inequality is done in Section \ref{sec:UB}.

In Section \ref{sec:LB} we estimate the left hand side of \eqref{eq:DNbrack} for a lower bound of $N$.

\section{Upper Bound}\label{sec:UB}
\subsection{Analytic estimate}\mbox{}

For the upper bound we use a perturbative approach following an idea of Stollmann \cite{Stollmann1}.

We set $H_{\Lambda}^{N} (t) := - \Delta_{\Lambda}^{N} + t\, V_\omega$ on $\Lambda$ with Neumann boundary conditions.
In the following we always take $\Lambda:=\Lambda_{L}$ the cube of side length $L$ around the origin. $L$ will be determined later.

By
\begin{align}
   E (t) = \lambda_1 (H_{\Lambda}^{N} (t))
\end{align}
we denote its lowest eigenvalue, and by $\varphi_0$ `the' normalized ground state of $H_{\Lambda}^{N} (0)$.
Note that $E$ is monotone increasing for $0 \leq t \leq 1$, and
\begin{equation*}
E (0) = \lambda_1 (- \Delta_{\Lambda}^{N}) = 0, \quad E (1) = \lambda_1 (H_{\Lambda}^{N}) \,.
\end{equation*}
Moreover, $E(\zeta)$ is a holomorphic function in\, $\{\zeta \in \IC ~ | ~~ |\zeta| < \nu \}$ for $\nu$ small, namely for $\nu = \pi L^{-2} =  \lambda_2 (-\Delta_{\Lambda}^{N})$. We have
\begin{equation*}
E' (0) = ~~ < \varphi_0, V_\omega \varphi_{0} > ~~ = \frac{1}{|\Lambda|} \int_{\Lambda} V_\omega (x) ~dx
\end{equation*}
by the Hellmann-Feynman Theorem (see e.\,g. \cite{ReedS}, Theorem XII.8, and the calculation after Theorem XII.3, or consult \cite{Stollmann2}, Theorem 4.1.29 ).

Consequently, the expectation of the random variable $E'(0)$ is given by
\begin{align}\label{eq:Epos}
   \IE(E'(0))~=~\frac{1}{|\Lambda|} \IE\Big(\int_{\Lambda} V_\omega (x)~dx\Big)~=~\IE\Big(\int_{\Lambda_{1}} V_\omega (x)~dx\Big)
\end{align}
where as usual $\Lambda=[-L/2,L/2]^{d}$ and $\Lambda_{1}$ is the unit cell $[-1/2,1/2]^{d}$.
It follows that $\IE(E'(0)) $ is strictly positive and independent of $L$.
By the analytic perturbation theory we also have:

\begin{lemma} {\rm (\cite{Stollmann2}, Lemma 2.1.2)} There are constants $C_{1},C_{2}$ such that for $0 \leq t \leq C_1 L^{-2}$ we have
\begin{equation*}
|E (t) - t E^{'} (0) | \leq C_2 L^2 t^2\,.
\end{equation*}
\end{lemma}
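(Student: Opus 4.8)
The plan is to read the estimate off from analytic perturbation theory for the isolated simple eigenvalue $0$ of $-\Delta_\Lambda^N$, the decisive point being that the perturbation $V_\omega$ is uniformly bounded. Indeed, Assumption \ref{Ass:1} gives $|q_i|\le Q$, and Assumption \ref{Ass:3} makes the lattice sum $\sup_{y\in\IR^d}\sum_{i\in\IZ^d}|f(y-i)|$ finite because $\alpha>d$; hence $\|U_\omega\|_\infty\le M_0$ and $\|V_\omega\|_\infty\le M_0^{2}=:M$ with a \emph{deterministic} constant $M$. Write $H_0^\Lambda:=-\Delta_\Lambda^N$, let $\varphi_0$ be its constant normalized ground state, and let $\nu=\lambda_2(H_0^\Lambda)$ be the spectral gap above $0$, which is of order $L^{-2}$. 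Put $\rho:=\nu/(2M)$, again of order $L^{-2}$. The function $g(\zeta):=E(\zeta)-\zeta E'(0)$ is the object to be estimated; it is holomorphic near $0$ and, since $E(0)=0$, satisfies $g(0)=g'(0)=0$, so I expect $|g(t)|=O(t^2)$ with an explicit constant.

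First I would run a Riesz-projection argument on the circle $\Gamma=\{|z|=\nu/2\}$. For $z\in\Gamma$ one has $\operatorname{dist}(z,\sigma(H_0^\Lambda))=\nu/2$, hence $\|(H_0^\Lambda-z)^{-1}\|\le 2/\nu$; therefore, for $|\zeta|<\rho$ the Neumann series for $\big(I+\zeta(H_0^\Lambda-z)^{-1}V_\omega\big)^{-1}$ converges uniformly on $\Gamma$, since $|\zeta|\,M\,(2/\nu)<1$. Consequently the Riesz projection $P(\zeta)=\frac{1}{2\pi i}\oint_\Gamma(z-H_\Lambda^N(\zeta))^{-1}\,dz$ is well defined, holomorphic and of rank one on $\{|\zeta|<\rho\}$, so $H_\Lambda^N(\zeta)$ has there exactly one eigenvalue $E(\zeta)$ inside $\Gamma$; this reproves the holomorphy of $E(\cdot)$ on a disk of radius $\rho\sim L^{-2}$ and, $E(\zeta)$ lying inside $\Gamma$, gives the crude a priori bound $|E(\zeta)|\le\nu/2$ on the whole disk. (For $d\ge 2$ the resolvent is not trace class, so rather than a trace I would extract $E(\zeta)$ from $E(\zeta)\,P(\zeta)\varphi_0=\frac{1}{2\pi i}\oint_\Gamma z\,(z-H_\Lambda^N(\zeta))^{-1}\varphi_0\,dz$ together with $P(\zeta)\varphi_0\neq 0$ for small $\zeta$.) Finally, the Hellmann--Feynman formula $E'(0)=\langle\varphi_0,V_\omega\varphi_0\rangle$ already in the text gives $0\le E'(0)\le M$, so $g$ is controlled on $\{|\zeta|<\rho\}$ by $|g(\zeta)|\le\nu/2+|\zeta|\,M$.

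The last step is a Cauchy estimate for $g$. Writing $g(\zeta)=\sum_{n\ge 2}a_n\zeta^n$, for every $r<\rho$ one gets $|a_n|\le r^{-n}\sup_{|\zeta|=r}|g(\zeta)|\le r^{-n}\big(\nu/2+rM\big)$; letting $r\uparrow\rho=\nu/(2M)$ yields $|a_n|\le\nu\,\rho^{-n}$. Summing the geometric tail, for $0\le t\le\rho/2$,
\[
  |g(t)|\ \le\ \sum_{n\ge 2}\nu\Big(\frac{t}{\rho}\Big)^{n}\ \le\ 2\,\nu\,\rho^{-2}\,t^{2}\,,
\]
and since $\nu\,\rho^{-2}=4M^{2}/\nu$ is of order $M^{2}L^{2}$ while $\rho/2$ is of order $L^{-2}$, this is precisely $|E(t)-tE'(0)|\le C_2 L^{2}t^{2}$ for $0\le t\le C_1 L^{-2}$, with $C_1,C_2$ depending only on $M$, hence deterministic.

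The main obstacle is the perturbative step: one must guarantee a priori that $E(\zeta)$ stays simple and isolated on a disk whose radius is comparable to the gap $\nu\sim L^{-2}$, and this relies entirely on the deterministic bound $\|V_\omega\|_\infty\le M$ coming from Assumptions \ref{Ass:1} and \ref{Ass:3}; without it the radius of analyticity would deteriorate with $\omega$ and the constants in the lemma could not be chosen uniformly. Once the a priori bound $|E(\zeta)|\le\nu/2$ on that disk is in hand, the rest is routine bookkeeping of the Taylor remainder of a holomorphic function with a double zero at the origin.
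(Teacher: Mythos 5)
The paper does not prove this lemma itself: it is imported verbatim from Stollmann's book (\cite{Stollmann2}, Lemma~2.1.2), so there is no in-paper argument to compare against. Your reconstruction is correct, complete, and is precisely the standard analytic-perturbation-theory argument that underlies the cited result: a deterministic bound $\|V_\omega\|_\infty\le M$ (from Assumptions~(2) and~(4)) guarantees convergence of the Neumann series for the perturbed resolvent on the circle $\Gamma=\{|z|=\nu/2\}$ uniformly for $|\zeta|<\rho=\nu/(2M)$, hence a rank-one Riesz projection and a holomorphic branch $E(\zeta)$ trapped inside $\Gamma$; combined with the Hellmann--Feynman bound $0\le E'(0)\le M$, the a~priori estimate $|g(\zeta)|\le\nu/2+|\zeta|M$ and the Cauchy coefficient bounds give $|g(t)|\le 2\nu\rho^{-2}t^2\sim M^2L^2t^2$ for $0\le t\le\rho/2\sim L^{-2}$. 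Your identification of the uniform deterministic bound on $V_\omega$ as the crux is exactly right; this is what makes $C_1,C_2$ independent of $\omega$, which is what the probabilistic estimate in Section~5.3 needs. Two cosmetic remarks: the gap of $-\Delta_{\Lambda}^N$ on $[-L/2,L/2]^d$ is $\pi^2L^{-2}$ (the paper writes $\pi L^{-2}$; the discrepancy is immaterial since only the order $L^{-2}$ enters), and your parenthetical about avoiding the trace for $d\ge2$ is prudent though not strictly necessary since one can also use the characteristic polynomial of $P(\zeta)H_\Lambda^N(\zeta)P(\zeta)$ restricted to the one-dimensional range of $P(\zeta)$.
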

So, for $t \leq C_1 L^{-2}$, and $b$ to be chosen later, we obtain
\begin{equation}\label{eq:et}
\IP ( E (t) \leq b L^{-2}) \leq \IP (E^{'} (0) \leq C_2 L^2 t + b\, t^{-1} L^{-2})\,.
\end{equation}
The choice $t = \frac{b^{1/2}}{{C_{2}}^{1/2}} L^{-2}$, with $b$ small enough to guarantee $t\leq C_1 L^{-2}$,  makes the right hand side of \eqref{eq:et} smaller than
\begin{align}\label{eq:PEp} \IP \big( E^{'} (0)~ \leq~ 2\,{C_{2}}^{1/2}\,b^{1/2}\big)\,.
\end{align}

By \eqref{eq:Epos} and by decreasing $b$ further, if necessary, we can finally bound \eqref     {eq:PEp} by
\begin{align*}
   \IP \Big(| E^{'} (0) - \IE (E^{'} (0)) | > \frac{1}{2}\,\IE \big(E^{'} (0)\big)\Big)\,.
\end{align*}

Summarizing, we obtain for small $b$:
\begin{align}\label{eq:lade}
   \IP\big(\lambda_{1}(H_{\Lambda}^{N})\;<\;b\,L^{-2}\big)~\leq~    \IP \Big(| E^{'} (0) - \IE (E^{'} (0)) | > \frac{1}{2}\,\IE \big(E^{'} (0)\big)\Big)\,.
\end{align}

To estimate $\IP (\lambda_1 (H_{\Lambda}^N) \leq E )$, we may therefore estimate large deviations of the random variable
$$E^{'} (0)~=~\int_{\Lambda} V_\omega (x) ~dx $$
from its mean value
as long as we take
\begin{align}\label{eq:EL2}
   E~\sim~L^{-2}\,. 
\end{align}
In the following we choose $E$ respectively $L$ so that \eqref{eq:EL2} is satisfied.

To estimate large deviations of $\int_{\Lambda} V_\omega (x) ~dx $, 
we employ the McDiarmid inequality which we introduce in the following section.

\subsection{McDiarmid inequality}\label{sec:McDiarmid}\mbox{}

To estimate $\IP (\frac{1}{|\Lambda|} \int_{\Lambda} V_\omega (x) ~dx < \lambda)$ from above we will use
a concentration inequality due to McDiarmid in a slightly extended form.

\begin{definition}
   Let $I$ be a countable index set and for each $i\in I$ let $R_{i}$ be a subset of $\IR$.

A measurable function $F:\prod_{i\in I}R_{i}\longrightarrow \IR$ is called a \emph{McDiarmid function} if there are constants
   $\sigma_{j} \geq 0$ with $\sum_{j\in I}\sigma_{j}<\infty $ such that for all $X=\{x_{i}\}_{i\in I}\in\prod_{i\in I}R_{i}$
   and  $X'=\{x^{'}_{i}\}_{i\in I}\in\prod_{i\in I}R_{i}$ with $x_{i}=x^{'}_{i}$ for $i\not=j$, we have 
   \begin{equation}\label{eq:diffF}
|F (X) - F (X^{'}) | \leq \sigma_j \,.
\end{equation}
\end{definition}

\begin{theorem} 
Suppose $\{X_n \}_{n \in \IN}$ is a sequence of independent real valued random variables such that $X_n$ takes values in $R_n \subset \IR$.

Let $F: \prod_{n \in \IN} R_n \rightarrow \IR$ be a McDiarmid function with constant $\sigma_{n} $ and
set 
$$
\sigma^2 := \sum_{j \in \IN} \sigma_j^2.
$$
Then for all $\lambda > 0$, we have 
\begin{equation} \label{g15}
\IP (|F (X) - \IE ( F (X)) | > \lambda) \leq 2 e^{- 2 \frac{\lambda^2}{\sigma^2}}\,.
\end{equation}
\end{theorem}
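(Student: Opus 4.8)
The plan is to establish first the classical finite-index bounded-differences inequality and then to extend it to a countable index set by a martingale limiting argument; I write the countable index set as $I=\IN$, the general case being identical after relabelling.

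\emph{The finite case.} Suppose $F$ depends only on $X_{1},\dots,X_{n}$. I would introduce the Doob martingale $M_{k}:=\IE\big(F(X)\mid\mathcal{F}_{k}\big)$, $k=0,1,\dots,n$, relative to the filtration $\mathcal{F}_{k}:=\sigma(X_{1},\dots,X_{k})$, so that $M_{0}=\IE(F(X))$ and $M_{n}=F(X)$. Writing $M_{k}=\Phi_{k}(X_{1},\dots,X_{k})$ with $\Phi_{k}(x_{1},\dots,x_{k})=\IE\big(F(x_{1},\dots,x_{k},X_{k+1},\dots,X_{n})\big)$, the bounded-differences property \eqref{eq:diffF} together with the independence of the $X_{j}$ shows that, conditionally on $\mathcal{F}_{k-1}$, the increment $M_{k}-M_{k-1}$ takes values in an interval of length at most $\sigma_{k}$ (its oscillation in $X_{k}$ is at most $\sigma_{k}$). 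Hoeffding's lemma then gives $\IE\big(e^{s(M_{k}-M_{k-1})}\mid\mathcal{F}_{k-1}\big)\leq e^{s^{2}\sigma_{k}^{2}/8}$ for every $s\in\IR$, and peeling off the conditional expectations one factor at a time yields $\IE\big(e^{s(F(X)-\IE F(X))}\big)\leq e^{s^{2}\sigma^{2}/8}$. A Chernoff bound with the optimal choice $s=4\lambda/\sigma^{2}$, applied to $F$ and to $-F$ and combined by a union bound, produces \eqref{g15} whenever $F$ depends on finitely many coordinates.

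\emph{Passage to the countable case.} Now let $F$ be a general McDiarmid function on $\prod_{n\in\IN}R_{n}$; note that $\sum_{j}\sigma_{j}<\infty$ forces $\sigma_{j}\to 0$, hence $\sigma^{2}=\sum_{j}\sigma_{j}^{2}<\infty$. With $\mathcal{F}_{n}:=\sigma(X_{1},\dots,X_{n})$ set $M_{n}:=\IE\big(F(X)\mid\mathcal{F}_{n}\big)=\Phi_{n}(X_{1},\dots,X_{n})$, where $\Phi_{n}(x_{1},\dots,x_{n}):=\IE\big(F(x_{1},\dots,x_{n},X_{n+1},X_{n+2},\dots)\big)$. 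Changing a single coordinate $x_{j}$ with $j\leq n$ alters the integrand in $\Phi_{n}$ pointwise by at most $\sigma_{j}$, hence alters $\Phi_{n}$ by at most $\sigma_{j}$; thus $\Phi_{n}$ is a McDiarmid function of the $n$ variables $x_{1},\dots,x_{n}$ with constants $\sigma_{1},\dots,\sigma_{n}$, and $\sum_{j=1}^{n}\sigma_{j}^{2}\leq\sigma^{2}$. Since $\IE M_{n}=\IE F(X)$ by the tower property, the finite case gives, for every $\lambda'>0$ and uniformly in $n$,
\[
   \IP\big(|M_{n}-\IE F(X)|>\lambda'\big)~\leq~2\,e^{-2(\lambda')^{2}/\sigma^{2}}\,.
\]
Because $F(X)$ is integrable (implicit in the statement, since $\IE(F(X))$ appears, and in our application $F$ is even bounded) and measurable with respect to $\mathcal{F}_{\infty}:=\sigma(X_{1},X_{2},\dots)$, Lévy's martingale convergence theorem gives $M_{n}\to F(X)$ almost surely. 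Hence for $\lambda'<\lambda$ the event $\{|F(X)-\IE F(X)|>\lambda\}$ is contained, up to a null set, in $\liminf_{n}\{|M_{n}-\IE F(X)|>\lambda'\}$, so Fatou's lemma gives $\IP\big(|F(X)-\IE F(X)|>\lambda\big)\leq 2\,e^{-2(\lambda')^{2}/\sigma^{2}}$; letting $\lambda'\uparrow\lambda$ yields \eqref{g15}.

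The step that needs the most care is this final limiting passage: one must verify that at each level $n$ the Doob martingale built on the infinite filtration is genuinely a \emph{finite} McDiarmid function whose constants are controlled uniformly by $\sigma^{2}$, and that $M_{n}\to F(X)$, so that the finite-$n$ estimates survive the limit. Everything else is the standard Azuma--Hoeffding computation and can, if preferred, simply be cited.
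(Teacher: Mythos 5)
Your proof is correct, and it takes a genuinely different route to the countable case than the paper does. The paper freezes the tail coordinates at a \emph{deterministic} point $\underline{Y}^{M}$, setting $F_{M}(\underline{X}):=F(X_{1},\dots,X_{M},Y_{M+1},\dots)$, uses the uniform bound $|F-F_{M}|\le\sum_{n>M}\sigma_{n}$ (and the same for the expectations) to shift the threshold to $\lambda-2\sum_{n>M}\sigma_{n}$, applies the finite-index inequality to $F_{M}$, and lets $M\to\infty$. You instead \emph{integrate out} the tail, taking $M_{n}=\IE(F\mid\mathcal{F}_{n})=\Phi_{n}(X_{1},\dots,X_{n})$ with $\Phi_{n}(x_{1},\dots,x_{n})=\IE\big(F(x_{1},\dots,x_{n},X_{n+1},\dots)\big)$, observe that $\Phi_{n}$ is a finite McDiarmid function with the original constants $\sigma_{1},\dots,\sigma_{n}$ (so $\sum_{j\le n}\sigma_{j}^{2}\le\sigma^{2}$ and the finite bound already has the right exponent, no threshold shift needed), and then invoke L\'evy's upward martingale convergence plus Fatou. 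Both arguments are sound; yours is a little cleaner because the threshold is not perturbed, and it is also marginally more general, since it only uses $\sum_{j}\sigma_{j}^{2}<\infty$ together with integrability of $F$, whereas the paper's truncation genuinely needs $\sum_{j}\sigma_{j}<\infty$ to make the tail errors go to zero (that stronger hypothesis is, however, built into the paper's definition of a McDiarmid function, so the distinction is moot here). One small stylistic remark: the paper takes the finite-index inequality entirely as a black box from the literature, while you sketch its Azuma--Hoeffding proof; either is acceptable.
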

\begin{proof}
This theorem, original from \cite{McDiarmid}, can be found in various sources, for example in \cite{Tao}, but only for finite collections $\{X_i \}_{i = 1}^{M}$ of random variables.
The 'limit $M \rightarrow \infty$' can be taken in the following way. 
Consider the vector $\underline{X}_M = (X_1, \cdots, X_M)$ of random variables and the non random vector
\begin{equation*} 
\underline{Y}^M := (Y_{M+1}, Y_{M+2}, \cdots) \in \prod_{n = M+1}^{\infty} R_n\,.
\end{equation*}
Set $F_M (\underline{X}) := F (X_1, \cdots , X_M, Y_{M+1}, \cdots )$ and $\IE_M = \IE (F_M (\underline{X}))$.
Note that both $F_M$ and $\IE_M$ depend on $Y^M$.
Using \eqref{eq:diffF}, we get 
\begin{equation*}  
|F (\underline{X}) - F_M (\underline{X})| \leq \sum_{n = M+1}^{\infty} \sigma_n \rightarrow 0
\end{equation*}
and
\begin{equation*}
|\IE \big( F (\underline{X})\big) - \IE_M | \leq \sum_{n = M+1}^{\infty} \sigma_n \rightarrow 0
\end{equation*}
as $M \to \infty$, uniformly in $\underline{Y}^M$. Now,
\begin{align*}
& \IP\,\Big(|F (\underline{X}) - \IE (F (\underline{X})) | > \lambda\Big)\notag\\
 \leq &~ \IP\, \Big(|F(\underline{X}) - F_N (\underline{X} ) | + |F_M (\underline{X}) - \IE_M |
 + | \IE_M - \IE ( F ( \underline{X} )) > \lambda\Big)\notag\\
 \leq &~ \IP\, \Big( |F_M (\underline{X}) - \IE_M | > \lambda - 2 \sum_{n=M+1}^{\infty} \sigma_N \Big)\,.
\end{align*}
Since $F_M (\underline{X})$ depends only on finitely many random variables (namely $X_1, \cdots , X_M$), we may apply the known version of the McDiarmid inequality for finite $M$, and obtain 
\begin{align*}
\IP\, \Big( |F_M (\underline{X}) - \IE_M | > \lambda - 2 \sum_{n=M+1}^{\infty} \sigma_N \Big) ~ \leq~ 2 e^{-\frac{2 (\lambda - 2 \sum_{i=M+1}^{\infty} \sigma_i)^2}{\sum_{i=1}^M \sigma_i^2}}
\,.
\end{align*}
Taking the limit $M\to\infty$, we arrive at \eqref{g15}.
\end{proof}

\subsection{Probabilistic estimate}\mbox{}

Now, we estimate the probability that 
$$
F (q_i) := \int_{\Lambda} (\sum_{i \in \IZ^d} q_i f (x - i))^2 ~dx 
$$
 deviates from its mean value.
In the light of \eqref{eq:lade}, this estimates the probability that $\lambda_{1}(H_{\Lambda}^{N})$ is small.

As usual we set  $\Lambda=[-L/2,L/2]^{d}$ and $\Lambda_{1}=[-1/2,1/2]^{d}$.

First, we compute $\IE (F (q_i))$:
\begin{align*}
& \IE \Big( \int_{\Lambda_{1}} \big( \sum_{i\in\IZ^{d}} q_i f (x - i ) \big)^2 ~dx \Big) = \sum_{i,j\in\IZ^{d}} \IE (q_i q_j) \int_{\Lambda_{1}} f (x - i) f (x - j) ~dx\\
& = \sum_{i\in\IZ^{d}} \IV (q_i) \int_{\Lambda_{1}} f (x - i)^2 ~dx + \sum_{i,j\in\IZ^{d}} \IE (q_i) \IE (q_j) \int_{\Lambda_{1}} f (x - i) f (x - j) ~dx\\
& = \IV (q_0) ||f||_2^2 + \IE (q_0)^2 \int_{\Lambda_{1}} \big(\sum f (x - i) \big)^2 ~dx\\
& =: \rho 
\end{align*} 
where $\|f\|_{2}^2:=\int_{\IR^d}|f(x)|^2\;d\,x\,.$ 
Consequently, for integer $L$, we get 
\begin{equation*}
\IE \Big(\int_{\Lambda} V (x) ~dx \Big) = \rho\,{|\Lambda|}\,.
\end{equation*}

To apply Mc Diarmid's inequality, we have to compute the $\sigma_j$. Pick $j \in \IZ^d$, and let 
$Q = \{q_i \}, Q^{'} = \{q_i^{'} \}$ with $q_i = q_i^{'}$, for $i \neq j$, and  $q_j = a ~ q_j^{'} = b$. Then 
\begin{align*}
& |F (Q^{'}) - F (Q) | \leq \int_{\Lambda} \Big| \big( \sum q_i^{'} f (x - i) \big)^2 - \big(\sum q_i f (x - i) \big)^2 \Big| ~dx\\
& \leq \int_{\Lambda} | (b - a) f (x - j)| ~~ | \sum (q_i^{'} + q_i) f (x - i) | ~dx\\
& \leq C \int_{\Lambda} |f (x - j) | ~dx
\end{align*}
where $C = 4 ~ \Big(\sup\; \supp(P_{0})\Big)^2 ~ \underset{x \in \IR}{\sup} ~ \sum |f (x - i) |$.
So, we got to estimate $\int_{\Lambda} | f (x - j) | ~dx$.
\begin{enumerate}
\item[{\em Case} 1:]
$|j| \leq M L$ with $M \geq 3$ to be chosen later. Then we estimate
\begin{equation*}
\int_{\Lambda} |f (x - j) | ~dx \leq \int_{\IR^d} | f (x) | ~dx = ||f||_1 =: \sigma_j \,.
\end{equation*}
\item[{\em Case} 2:]
$|j| > M L$, and hence dist $(j,\Lambda) \geq (M-1) L$. Then
\begin{align*}
\int_{\Lambda} |f (x - j) | ~dx \leq & C \int_{\Lambda} \frac{1}{|x - j|^{\alpha}} ~dx\\
\leq & C L^d \frac{1}{\underset{x \in \Lambda}{\inf} ~ | x - j|^{\alpha}}\\ \leq & C^{'} \frac{L^{d}}{|j|^{\alpha}}\\
=: & \sigma_j\,.
\end{align*}
\end{enumerate}
Therefore, 
\begin{align*}
\sum_{i} \sigma_i^2 ~&\leq~ C_{1} \Big( \sum_{|i| \leq M L} ||f||_1^2 + \sum_{|i| \geq M L} \big(\frac{L^{d}}{|j|^{\alpha}} \big)^2 \Big)\\
&\leq~ C_{2} L^d (1 + L^d L^{- 2 \alpha + d}) \leq C_{3} L^d 
\end{align*}
as $\alpha > d$. Hence, Mc Diarmid's inequality yields 
\begin{align*}
\IP \Big(\Big| \int_{\Lambda} V (x) ~dx - \IE \big(\int_{\Lambda} V (x) ~dx \big) \Big| \geq \lambda L^d\Big) 
\leq & 2 e^{- 2 \frac{\lambda^2 L^{2d}}{\sum \sigma_i^2}}\\ \leq & 2 e^{- 2C_3^{-1} \lambda^2 L^d}\,.
\end{align*}
In particular, whenever $\epsilon < \IE (\int_{\Lambda_{1}} V (x) ~dx)$, and \eqref{eq:EL2} holds true, we have 
\begin{equation} \label{g3} 
\IP \left( \frac{1}{L^d} \int_{\Lambda} V (x) ~dx < \epsilon \right) \leq 2 e^{-\tilde{C} L^d}\leq 2 e^{-\tilde{\tilde{C}} E^{-d/2}}\, .
\end{equation}
Now, combining the upper bound in \eqref{eq:DNbrack}, \eqref{g2}, \eqref{g16}, \eqref{eq:lade}, and \eqref{g3}, 
we find that the upper asymptotic bound  
\begin{equation} \label{g4}
\underset{E\searrow 0}{{\limsup}} ~~ \frac{\ln | \ln N (E)|}{\ln E} ~\leq~- \frac{d}{2}\,
\end{equation} 
holds true under the assumptions of Theorem \ref{th:lifshits} (i). 
\subsection{Upper Bound 2}\mbox{ }

The general upper bound turns out to be correct (i.e. to agree with the lower bound) in the case $\alpha \geq d+1$. For the long range case ($d < \alpha < d+1$) we need another estimate and stronger assumptions. What we need (at least for our proof) is that both $q_i$ and $f$ have a  definite sign. That is why in this section we assume the hypotheses of the second part of Theorem \ref{th:lifshits}. More precisely, 
for definiteness, we assume $\supp ~ P_0 \subset [0, Q]$ and:
\begin{align*}
C_1 (1 + |x|)^{- \alpha} \leq f (x) \leq C_2 (1 + |x|)^{ - \alpha}
\end{align*}
with  $d < \alpha < d+1$ and $C_{1},C_{2}>0$.
We estimate:
\begin{equation} \label{eq:estKS}
\IP \big(\lambda_1 (H_{\Lambda}^N ) < E \big) \leq \IP \big( \underset{x \in \Lambda}{\inf} ~ V_\omega (x) < E \big) 
\leq  \IP \big(\underset{x \in \Lambda}{\inf} U_\omega (x) < E^{\frac{1}{2}} \big)\,. 
\end{equation}
An estimate of the right-hand side of \eqref{eq:estKS} can be found in \cite{KirschS}. For the reader's convenience we give here an alternative proof using Mc Diarmid's inequality.

Setting $\rho=\sqrt{E}$, we estimate
\begin{align*}
& \IP\,\left(\underset{x \in \Lambda}{\inf} \sum_{i \in \IZ^d} q_i f (x - i) < \rho\right) \\
 \leq ~& \IP\, \left( C\; \sum_{i \in \IZ^d} q_i\; \frac{1}{1 + \underset{x \in \Lambda}{\sup} |x - i|^{\alpha}} < \rho\right)\\
 \leq &~ \IP\,\left(C' \sum_{i \in \IZ^d} q_i \frac{1}{(L+|i|)^{\alpha}} < \rho\right)\,.
\end{align*}
We apply McDiarmid's inequality to $\sum q_i \frac{1}{(L + |i|)^{\alpha}}$, then with

\begin{align}
\sigma_j = 2 Q \frac{1}{(L + |j|)^{\alpha}} \leq
\begin{cases}
\frac{C}{L^{\alpha}} & \text{~for~} |j| \leq M L\,,\\
\frac{C}{|j| \alpha} & \text{~ for ~} |j| > M L\,.
\end{cases}
\end{align}

Thus, 
\begin{align}
   \sum \sigma_j^2 \leq C^{'} L^{d - 2 \alpha}\,.
\end{align}

Moreover,
\begin{equation*}
\IE \big( \sum q_i \frac{1}{(L + |i|)^{\alpha}} \big) = C \sum \frac{1}{(L + |i|)^{\alpha}} \sim L^{d - \alpha}\,.
\end{equation*}
So, we have to take $\rho \sim L^{d - \alpha} $.
With this choice, McDiarmid's inequality gives

\begin{align}
\IP \left(\sum q_i \frac{1}{(L + |i|)^{\alpha}} < C L^{d - \alpha} \right) ~\leq &~ e^{- C^{'} L^d}\notag\\
\leq~& e^{- C^{''} \rho^{- \frac{d}{\alpha - d}}}\notag\\ ~= &~ e^{-C^{''}E^{- \frac{d}{2 (\alpha - d)}}}\,.\label{g17} 
\end{align}
This estimate is better than the general estimate \eqref{g3} 
if
$2 ( \alpha - d) < 2$, that is, if $\alpha < d+1$.

Putting together the upper bound in \eqref{eq:DNbrack}, \eqref{g2}, \eqref{g16}, \eqref{eq:estKS}, and \eqref{g17},
we conclude that under the assumptions of Theorem \ref{th:lifshits} (ii), we have 
\begin{equation} \label{g5}
\underset{E\searrow 0}{{\limsup}} ~~ \frac{\ln | \ln N (E)|}{\ln E} ~\leq~- \frac{d}{2(\alpha-d)}\,.
\end{equation}

\section{Lower bound}\label{sec:LB}
In this section we suppose that Assumption \ref{Ass:2} holds true, i.e that 
    \begin{equation} \label{g20} 
\IP (|q_i| \leq \epsilon ) \geq C \epsilon^K 
    \end{equation} 
 for some $C > 0$, $K \geq 0$, and all $\epsilon > 0$ small enough. Without loss of generality we assume that $C=1$ and $K>0$.

Now, we consider $H_{\Lambda}^D$ with Dirichlet boundary conditions. By the lower bound in \eqref{eq:DNbrack}, we have 
\begin{equation} \label{g6} 
N(E) \geq L^{-d} \IP\left(\lambda_1(H_\Lambda^D) \leq E\right). 
\end{equation}

For further references we recall that the ground state energy $\lambda_0$ of the DIrichlet Laplacian $- \Delta_{\Lambda}^D$ is given by 
\begin{equation} \label{g1} 
\lambda_0 = d \left(\frac{\pi}{L}\right)^2
\end{equation}
and the ground state is
\begin{equation*}
\varphi_0 (x) = \left(\frac{2}{L}\right)^{\frac{d}{2}} \prod_{i=1}^d \cos (\pi \, L^{-1} x_i), \quad x \in \Lambda \,.
\end{equation*}
We consider the set $\Omega_L^{\epsilon} \subset \Omega$ with
\begin{equation*}
\Omega_L^{\epsilon} = \{ \epsilon ~ | ~~ |q_i| \leq \epsilon \text{~for~} |i| \leq L + R \}, \quad R \geq L\,.
\end{equation*}
Later we will choose $R$  as $R = L^{\beta}$ with $\beta \geq 1$. By \eqref{g20}, 
\begin{equation*}
\IP (\Omega_L^{\epsilon}) \geq (\epsilon^K)^{(L + R)^d} = e^{K (\ln ~ \epsilon) (L + R)^d}\,.
\end{equation*}
We will show that $\lambda_1 (H_{\Lambda}^D)$ is small on $\Omega_L^{\epsilon}$. We have
\begin{align*}
\lambda_1 (H_{\Lambda}^D)&  \leq ~~ < \varphi_0, H_{\Lambda}^D \varphi_0 >\\ ~& = \lambda_0 + \int_{\Lambda} V \varphi_0^2 ~dx\\
& \leq~ \lambda_0 + \frac{C}{|\Lambda^{'}|} \int_{\Lambda} V (x) ~dx, 
\end{align*}
$\lambda_0  \approx L^{-2}$ being given by \eqref{g1}.

Now for $\omega \in \Omega_L^{\epsilon}$
\begin{align*}
& \frac{1}{|\Lambda|} \int_{\Lambda} V (x) ~dx\\
 \leq ~& \frac{1}{|\Lambda|} \int_{\Lambda} \Big( \epsilon \sum_{|i| \leq L + R} |f (x - i)| + Q \sum_{|i| > L + R} |x - i|^{-\alpha} \Big)^2 ~dx\\
\leq ~& \frac{2 \epsilon}{|\Lambda|} \int_{\Lambda} \Big( \sum_{|i| \leq L + R} |f (x - i)| \Big)^2 ~dx + \frac{2 Q}{|\Lambda|} \int_{\Lambda} \Big( \sum_{|i| > L + R} |x - i|^{-\alpha} \Big)^2 ~dx\\
 \leq ~& C \Big( \epsilon + \underset{x \in \Lambda}{\sup} ~ \big( \sum_{|i| > L + R} |x - i|^{-\alpha} \big)^2 \Big)\\
 \leq ~& C^{'} \Big( \epsilon + \big(\sum_{|i| > L+R} |i|^{-\alpha}\big)^2 \Big)\\
 \leq ~& C^{''} \left( \epsilon + R^{- 2 \alpha + 2 d} \right)\,.
\end{align*}
Let us now choose $R = L^{\beta}$. If $\alpha \geq d + 1$, we take $\beta = 1$. Then
\begin{equation*}
\frac{1}{|\Lambda|} \int_{\Lambda} V ~dx \leq C^{''} \big( \epsilon + L^{- 2 \alpha + 2 d} \big) \leq C^{''} \epsilon + L^{-2}. 
\end{equation*}
Thus, for $\omega \in \Omega_L^{L^{-2}}$, we have
\begin{equation} \label{g18} 
\lambda_1 (H_{\Lambda}^D) \leq C L^{-2}\,.
\end{equation}
Hence, 
\begin{equation} \label{g19} 
\IP \left(\lambda_1 (H_{\Lambda}^D) \leq C L^{-2} \right)  \geq \IP (\Omega_L^{L^{-2}})
 \geq e^{C (\ln L^{-2}) L^d}\, .
\end{equation}
Choosing $L$ so that $E = CL^{-2}$, we find that  
\begin{equation} \label{g7} 
\IP \left(\lambda_1 (H_{\Lambda}^D) \leq E \right)  \geq 
C' e^{C'' (\ln E) E^{-d/2}}. 
\end{equation} 
Putting together \eqref{g6} and \eqref{g7}, we get 
$$ 
\underset{E\searrow 0}{{\liminf}} ~~ \frac{\ln | \ln N (E)|}{\ln E} ~\geq~- \frac{d}{2}\,,
$$ 
which combined with \eqref{g4} implies \eqref{g8}. This completes the proof of the first part of Theorem \ref{th:lifshits}.

Now, we turn to the case $d < \alpha < d+1$. In this case we take $\beta = \frac{1}{\alpha - d} > 1$. Then, similarly to \eqref{g18}, we have 
\begin{equation*}
\lambda_1 (H_{\Lambda}^D) \leq \tilde{C} L^{-2} 
\end{equation*}
for $\omega \in \Omega_L^{L^{-2}}$. Therefore, similarly to \eqref{g19} we get 
\begin{equation*}
\IP (\lambda_1 (H_{\Lambda}^D) \leq \tilde{C} L^{-2} ) \geq P (\Omega_L^{L^{-2}}) 
\geq e^{C (\ln L^{-2}) R^d} \geq e^{ C^{'} (\ln L^{-2}) L ^{\frac{d}{\alpha - d}}}\,.
\end{equation*}
Choosing $L$ so that  $E = C' L^{-2}$, we obtain 
\begin{equation*}
\IP (\lambda_1 (H_{\Lambda}^D) \leq E) \geq \tilde{C}e^{\tilde{C}^{''} (\ln E)\, E^{\frac{d}{2(\alpha - d)}}} \,,
\end{equation*}
which, together with \eqref{g6} yields 
    \begin{equation} \label{g10} 
\underset{E\searrow 0}{{\liminf}} ~~ \frac{\ln | \ln N (E)|}{\ln E} ~\geq~- \frac{d}{2(\alpha - d)}\,. 
    \end{equation} 
Now, \eqref{g9} follows from \eqref{g5} and \eqref{g10}, and the proof of Theorem \ref{th:lifshits} (ii) is complete.\\

{\bf Acknowledgements}. The authors gratefully acknowledge  the partial
support of the Chilean Scientific Foundation {\em Fondecyt} under Grants 1130591  and 1170816.\\
A considerable part of this work has been done during W. Kirsch's visits to the Pontificia Universidad Cat\'olica de Chile and during  G. Raikov's visits to the University of Hagen,
Germany. We thank these institutions for financial support and hospitality.

We also would like to thank the referee for careful reading and a number of useful suggestions which, we feel, improved substantially the readability of the article.
\\
\bibliographystyle{plain}
\bibliography{LifshitzC}

\vspace{1cm}
{\sc Werner Kirsch}\\
Fakult\"at f\"ur Mathematik und Informatik\\
FernUniversit\"at in Hagen\\
Universit\"atsstrasse 1\\
D-58097 Hagen, Germany\\
E-mail: werner.kirsch@fernuni-hagen.de\\

{\sc Georgi Raikov}\\
Facultad de Matem\'aticas\\
Pontificia Universidad Cat\'olica de Chile\\
Av. Vicu\~na Mackenna 4860\\ Santiago de Chile\\
E-mail: graikov@mat.uc.cl

\end{document}